\newtheorem{thm}{Theorem}
\newtheorem{cor}{Corollary}
\newtheorem{lem}{Lemma} 
\newtheorem{defi}{Definition}
\newtheorem{prop}{Proposition}
\newtheorem{exx}{Example}
\newtheorem{remm}{Remark}
\newenvironment{definition}{\begin{defi}\rm }{\hfill \hspace*{1pt} \hfill $\lrcorner$\end{defi}}
\newenvironment{remark}{\begin{remm}\rm }{\hfill \hspace*{1pt} \hfill $\lrcorner$\end{remm}}
\newenvironment{proposition}{\begin{prop} \rm }{\hfill \hspace*{1pt} \hfill $\lrcorner$\end{prop}}
\newenvironment{example}{\begin{exx}\rm }{\hfill \hspace*{1pt} \hfill $\lrcorner$ \end{exx}}
\newenvironment{proofof}{{\em Proof of }}{\hfill \hspace*{1pt}
\hfill $\blacksquare$}
\newcommand\real{\ensuremath{{\mathbb R}}}
\newcommand\realn{\ensuremath{{\mathbb{R}^n}}}
\newcommand\mymatrix[2]{\left[\begin{array}{#1} #2 \end{array}\right]}
\newcommand{\smallmat}[1]{\left[ \begin{smallmatrix}#1 \end{smallmatrix} \right]}
\newcommand{\calA}{\mathcal{A}}
\newcommand{\calK}{\mathcal{K}}
\newcommand{\calW}{\mathcal{W}}
\title{\LARGE \bf
A dissipativity theorem for $p$-dominant systems
}
\author{Fulvio Forni and  Rodolphe Sepulchre
\thanks{
F. Forni and R. Sepulchre are with the University of Cambridge, Department of Engineering, 
Trumpington Street, Cambridge CB2 1PZ, 
\texttt{f.forni@eng.cam.ac.uk|r.sepulchre@eng.cam.ac.uk}.
The research leading to these results has received funding from the European Research Council under the
Advanced ERC Grant Agreement Switchlet n.670645.} 
}
\begin{document}

\maketitle
\thispagestyle{empty}
\pagestyle{empty}

\begin{abstract}
We revisit the classical dissipativity theorem of linear-quadratic theory in a generalized framework
where the quadratic storage is negative definite in a $p$-dimensional subspace and positive definite in a complementary subspace. 
The classical theory assumes $p=0$ and provides an interconnection theory for stability analysis, i.e.
convergence to a zero dimensional attractor. The generalized theory is shown to provide an interconnection theory for
$p$-dominance analysis, i.e. convergence to a $p$-dimensional dominant subspace. In turn, this property is
the differential characterization of a generalized contraction property for nonlinear systems.
The proposed generalization opens a novel avenue for the analysis of  interconnected systems with low-dimensional attractors.
\end{abstract}

\section{Introduction}

Dissipativity theory \cite{Willems1972} is a cornerstone of system theory. A dissipation inequality
relates the variation of the storage, which relates to an internal system property, to the
supply rate, which expresses how much the environment can affect the internal property.
When the storage is positive definite, the internal property at hand is Lyapunov stability,
and dissipativity theory provides an interconnection theory for the analysis of stability.
Dissipativity theory is constructive for linear systems and quadratic storages, leading to
stability criteria that can be verified through the solution of linear matrix inequalities \cite{Willems1972a}.

In the present paper, we explore the significance of  linear quadratic dissipativity theory
when the  quadratic storage is no longer positive definite but instead has a fixed inertia,
that is, $p$ negative eigenvalues and  $n-p$ positive eigenvalues. We show that the internal
dissipation inequality then characterizes $p$-dominance, that is, the existence of an invariant
$p$-dimensional subspace that attracts all solutions. Dissipativity theory then becomes an
interconnection theory for the analysis of $p$-dominance. In this context, stability can be
interpreted as $0$-dominance, in the sense that a zero-dimensional subspace attracts all solutions.

Our interest in $p$-dominance as a system property stems primarily from its significance in
the differential analysis of nonlinear systems. We use {\it differential} dominance, that is,
infinitesimal dominance along trajectories, to analyse the asymptotic behavior of nonlinear systems with
low-dimensional attractors. Beyond $p=0$, which corresponds to the classical analysis of convergence to
a unique equilibrium, we focus on $p=1$ as a relevant framework to study multistability and on $p=2$
as a relevant framework to study limit cycle oscillations. Hence we primarily think of
linear-quadratic dissipativity theory of $p$-dominance as an interconnection theory for the differential
 analysis of multistable or oscillatory nonlinear systems.

This paper concentrates on the main ideas of the proposed approach, leaving aside many possible generalizations. 
Section \ref{sec:p-dominance} provides the linear-quadratic dissipation characterization
of $p$-dominant linear systems and explains  its link with the contraction of a rank $p$ ellipsoidal cone. Section \ref{sec:p-dissipative} presents a straightforward extension of the fundamental dissipativity theorem to $p$-dominant linear systems. Section \ref{sec:diff-p-dominance} outlines the fundamental property
of $p$-monotone systems and how it generalizes contraction (interpreted as 0-monotonicity)  and monotonicity (interpreted as 1-monotonicity), two system properties that have been extensively studied in nonlinear system theory.  
Section \ref{sec:diff-p-dissipative} extends $p$-dissipativity to the nonlinear setting and provides
a basic illustration of the potential of the theory with 
a simple example of a 2-dominant system that  has a limit cycle oscillation 
resulting from the passive interconnection of two 1-dominant systems.

\section{$p$-dominant linear time-invariant systems}
\label{sec:p-dominance}

\begin{definition}
\label{def:LMI-dominance}
A linear system $\dot{x} = Ax$ is  $p$-dominant with rate $\lambda \ge 0$ 
if and only if there exist a symmetric matrix $P$ with inertia $(p,0,n-p)$  such that
\begin{equation}
\label{eq:LMI-dominance}
A^T P + P A \le  -2 \lambda P + \epsilon I \ . 
\end{equation}
for some $\epsilon \ge 0$. The property is {\it strict} if $\epsilon >0$.
\end{definition}

Equivalent characterizations of $p$-dominance are provided in the following proposition (the proof is in appendix).
\begin{proposition}
\label{prop:equivalences}
For $\epsilon>0$, the Linear Matrix Inequality \eqref{eq:LMI-dominance} is equivalent to any of the following conditions:
\begin{enumerate}
\item The matrix $A+ \lambda I $ has $p$ eigenvalues with strictly positive real part
and $n-p$ eigenvalues with strictly negative real part.
\item there exists an invariant splitting of the vector space $\real^n$ into dominant $E_p$ and
non-dominant $E_{n-p}$ eigenspaces such that any solution of the linear system $\dot x = Ax$ can be written as $x(t)=x_p(t)+x_{n-p}(t)$ 
with $x_p(t) \in E_p$, $x_{n-p}(t) \in E_{n-p}$ and for some $0 < C_p \leq 1 \leq C_{n-p}$ and $\lambda_p < \lambda < \lambda_{n-p}$,
$$
\begin{array}{rcl}
\mid x_p(t)\mid &\geq & C_p \,e^{-\lambda_p t} \mid x_p(0) \mid \vspace{1mm}\\
 \mid x_{n-p}(t) \mid &\leq & C_{n-p}\, e^{-\lambda_{n-p} t} \mid x_{n-p}(0)\mid .
\end{array}
$$
\end{enumerate}
\end{proposition}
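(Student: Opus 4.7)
The plan is to reduce the equivalence to the classical Ostrowski--Schneider inertia theorem for Lyapunov operators, combined with the standard spectral analysis of linear time-invariant systems. First I would change variables: setting $\bar A := A + \lambda I$, the spectrum of $\bar A$ is that of $A$ translated by $\lambda$, so condition (2) says exactly that $\bar A$ has $p$ eigenvalues in the open right half plane and $n-p$ in the open left half plane, and \eqref{eq:LMI-dominance} rewrites as $\bar A^T P + P \bar A \preceq \epsilon I$. Under strictness ($\epsilon > 0$) this weak LMI should be converted to a strict Lyapunov inequality $\bar A^T P + P \bar A \prec 0$ with $P$ nonsingular Hermitian of inertia $(p,0,n-p)$; thereafter the entire analysis is carried out on the shifted operator $\bar A$.

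For (1) $\Rightarrow$ (2), I would invoke the Ostrowski--Schneider generalized inertia theorem: whenever a nonsingular Hermitian $P$ satisfies $\bar A^T P + P \bar A \prec 0$, the matrix $\bar A$ has no eigenvalues on the imaginary axis and the number of its eigenvalues with positive real part (resp.\ negative real part) equals the number of negative eigenvalues of $P$ (resp.\ positive eigenvalues). With $P$ of inertia $(p,0,n-p)$, this is exactly (2). The converse (2) $\Rightarrow$ (1) is constructive: decompose $\real^n = E^u \oplus E^s$ into the $\bar A$-invariant unstable/stable eigenspaces of dimensions $p$ and $n-p$. On $E^s$, $\bar A|_{E^s}$ is Hurwitz and the classical Lyapunov equation produces $P_s \succ 0$ with $\bar A|_{E^s}^T P_s + P_s \bar A|_{E^s} \prec 0$; on $E^u$, $-\bar A|_{E^u}$ is Hurwitz, so the same theorem yields $Q_u \succ 0$, and then $P_u := -Q_u \prec 0$ satisfies $\bar A|_{E^u}^T P_u + P_u \bar A|_{E^u} \prec 0$. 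Assembling $P := P_u \oplus P_s$ in the splitting gives a $P$ of inertia $(p,0,n-p)$ that satisfies the strict Lyapunov inequality for $\bar A$, hence \eqref{eq:LMI-dominance} for any $\epsilon > 0$.

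The equivalence (2) $\Leftrightarrow$ (3) is the standard spectral decomposition of the LTI flow. Let $E_p$ and $E_{n-p}$ be the $A$-invariant generalized eigenspaces corresponding to eigenvalues of real part strictly greater than $-\lambda$ and strictly less than $-\lambda$; any solution decomposes as $x(t) = x_p(t) + x_{n-p}(t)$. Choosing rates $\lambda_p < \lambda < \lambda_{n-p}$ that strictly separate the two eigenvalue clusters, the upper bound on $|x_{n-p}(t)|$ is the usual exponential estimate for the Hurwitz block $A|_{E_{n-p}} + \lambda I$; the lower bound on $|x_p(t)|$ is obtained by applying the same upper estimate to the reversed-time flow $e^{-A|_{E_p} t}$ and inverting, exploiting that $-A|_{E_p}$ has all eigenvalues with real part $< \lambda_p$. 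The normalizations $0 < C_p \le 1$ and $C_{n-p} \ge 1$ are then secured by evaluating the bounds in the quadratic norms induced by the blocks $-P_u$ and $P_s$ from the previous paragraph (in which the constants are unity) and transferring back to the Euclidean norm.

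The main technical obstacle is the bookkeeping of signs and inertia in the indefinite Lyapunov problem, in particular the careful invocation of the Ostrowski--Schneider theorem. A subsidiary but nontrivial point is promoting the weak inequality $\bar A^T P + P \bar A \preceq \epsilon I$ with slack $\epsilon > 0$ to a genuinely strict matrix inequality: without this step, the LMI alone is compatible with $\bar A$ having imaginary-axis eigenvalues, so an inertia-preserving perturbation of $P$ is needed to rule out this degenerate case before the inertia theorem applies.
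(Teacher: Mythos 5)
Your core argument is sound and, in the direction from the LMI to the eigenvalue splitting, takes a genuinely different (and arguably cleaner) route than the paper: writing $\bar A := A+\lambda I$, you invoke the Ostrowski--Schneider main inertia theorem wholesale, whereas the paper block-diagonalizes $\bar A$ along its own spectral splitting, reads off the diagonal blocks of the inequality to obtain blockwise Lyapunov inequalities, and then pins down the block sizes with a principal-submatrix inertia argument based on \cite[Lemma 2]{Gilber1991} --- in effect re-proving by hand exactly the case of the inertia theorem that you cite. Your converse construction ($P = P_u \oplus P_s$ with $P_u=-Q_u\prec 0$ from the anti-stable block and $P_s\succ 0$ from the Hurwitz block) is the same as the paper's, and your treatment of the trajectory characterization (spectral decomposition, exponential bound on the non-dominant block, reversed-time estimate for the lower bound, with the paper using the rescaling $x(t)=e^{-\lambda t}\bar x(t)$) matches the paper's ``standard properties of linear systems'' step; the normalization $0<C_p\le 1\le C_{n-p}$ is immediate anyway, since lower-bound constants may always be decreased and upper-bound constants increased. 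Like the paper, you leave the direction from the trajectory bounds back to the spectral condition essentially implicit.

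The one genuine flaw is the ``promotion'' step. Taking \eqref{eq:LMI-dominance} literally as $\bar A^T P + P\bar A \preceq \epsilon I$ with $\epsilon>0$, no inertia-preserving perturbation of $P$ (nor any other argument) can upgrade it to $\bar A^T P + P\bar A \prec 0$: the literal inequality is strictly weaker and is genuinely compatible with imaginary-axis eigenvalues. For example, with $n=1$, $A=0$, $\lambda=0$, $P=1$ the literal inequality holds for every $\epsilon>0$, yet $A+\lambda I$ has a zero eigenvalue; so under that reading the proposition itself would be false and the gap cannot be closed. The intended reading --- the one used in the paper's appendix, where the LMI is taken to read $(A+\lambda I)^T P + P(A+\lambda I)<0$, and the one consistent with the dissipation form \eqref{eq:internal}, which expands to $A^T P + PA + 2\lambda P + \varepsilon I \preceq 0$ --- is that $\epsilon$ certifies strictness rather than relaxing the inequality. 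Under that reading no promotion is needed: you should delete the perturbation step and apply the inertia theorem directly to the strict inequality, after which your proof goes through.
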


The property of $p$-dominance ensures a splitting between $n-p$ {\it transient} modes
and $p$ {\it dominant} modes.  Only the $p$ dominant modes dictate the asymptotic behavior.

\begin{example}
\label{ex:linear_msd}
Consider a simple mass-spring-damper system
with mass $m =1$, elastic constant $k = 1$  and damping coefficient $c = 4$,
that is,
$
\dot{x} =  \smallmat{0 & 1 \\ -1 & -4} x\ .
$
Eigenvalues are in $-0.2679$ and $-3.7321$. 
The system is $p$-dominant with
$
P =  \smallmat{ -0.4338  & 0.6535 \\ 0.6535 &   1.4338}     
$
solution to \eqref{eq:LMI-dominance} with $\lambda = 0.2679 + 1$ computed
using Yalmip \cite{Yalmip2004}. 
$P$ has inertia $(1,0,1)$.
Following Proposition \ref{prop:positivity}, the system is positive
with respect to the cone represented in Figure \ref{fig:cone_msd}, left.

Note that the eigenvector of $P$ related to the negative eigenvalue belongs
to the interior of the cone $\{x\in\realn\,|\, x^T P x \leq 0\}$, closer to the position axis. The other eigenvector 
characterizes the direction transversal to the cone, closer to the velocity axis, 
as expected. The tendency of these two eigenvectors to align with position and velocity axes becomes clear for large damping values.
For example, for 
$
\dot{x} = \smallmat{0 & 1 \\ -1 & -8} x
$
we get
$
P =  \smallmat{  -0.9193  &  0.2177 \\  0.2177  &  1.9193}
$,
whose cone is in Figure \ref{fig:cone_msd}, right.
\begin{figure}[htbp]
\centering
\includegraphics[width=0.49\columnwidth]{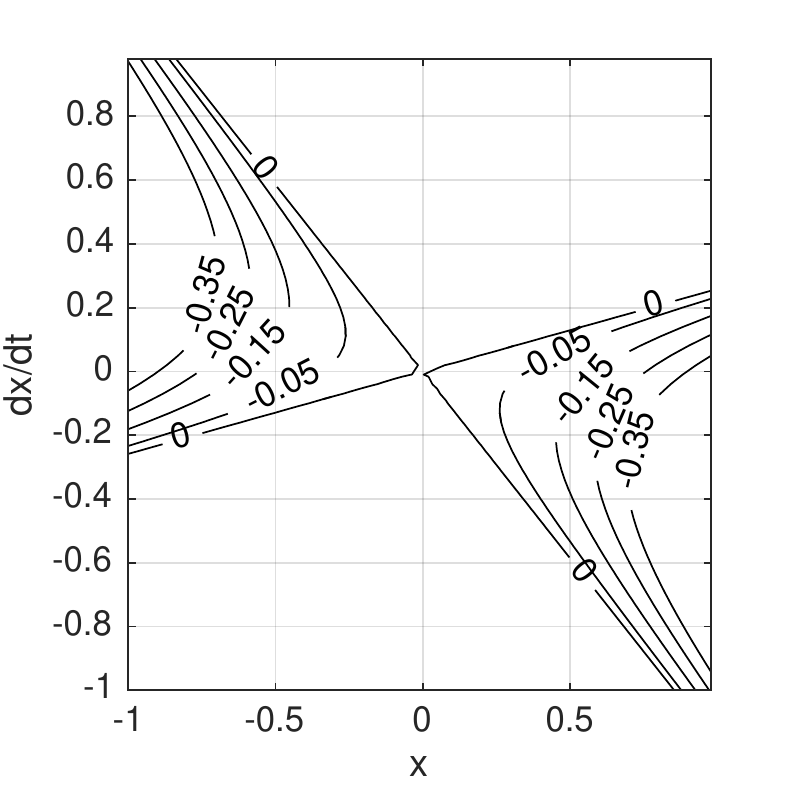} 
\includegraphics[width=0.49\columnwidth]{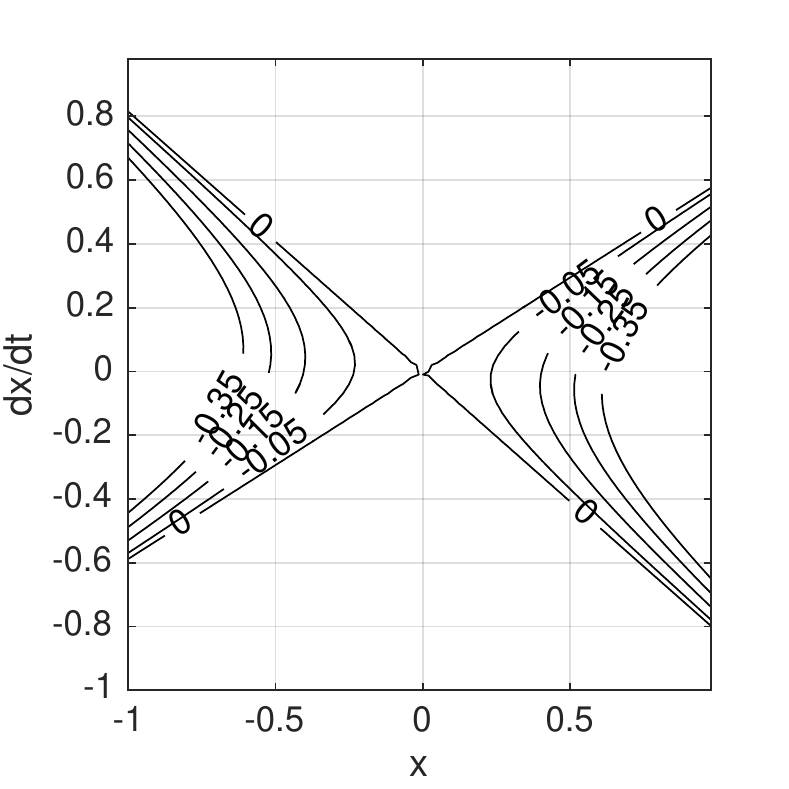}
\vspace*{-7mm}
\caption{Cones for the mass-spring-damper system 
for $c=4$ (left) and $c=8$ (right).}
\label{fig:cone_msd}
\end{figure}
\end{example}

$p$-dominance is a generalization of the classical property of exponential stability, which corresponds to $p=0$: all modes are transient and the asymptotic behavior
is $0$-dimensional. 
In contrast, for $p>0$, the property of $p$-dominance is closely connected to the the classical property of positivity
 \cite{Bushell1973,Luenberger1979,Vandergraft1968}.
We recall that a linear system $\dot{x} = Ax$ 
is positive with respect to a cone $\calK \subseteq \realn$ if
$e^{A t} \calK \subseteq \calK$ for all $t \geq 0$. Strict positivity
further requires that the system maps any nonzero vector of the boundary
of $\calK$ into the interior of the cone, for $t>0$.

\begin{proposition}
\label{prop:positivity}
For any $0< p< n$, 
a $p$-dominant system is strictly positive with respect to the cone
\begin{equation}
\calK := \{x \in \realn \,|\, x^T P x \leq 0 \}
\end{equation} 
where $P$ is any solution to \eqref{eq:LMI-dominance}.
\end{proposition}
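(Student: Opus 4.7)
The plan is to use the quadratic function $V(x) = x^T P x$, so that $\calK = \{x \in \realn : V(x) \leq 0\}$, with interior $\{V < 0\}$ and boundary given by the nullcone $\{x \neq 0 : V(x) = 0\}$ together with the origin. Differentiating $V$ along trajectories of $\dot x = A x$ and invoking \eqref{eq:LMI-dominance} yields the dissipation-type inequality
\begin{equation*}
\dot V(x) = x^T (A^T P + P A) x \leq -2\lambda V(x) + \epsilon |x|^2.
\end{equation*}
Strict positivity with respect to $\calK$ amounts to showing that $A x_b$ points strictly into $\{V < 0\}$ at every boundary point $x_b \neq 0$, i.e.\ $x_b^T (A^T P + P A) x_b < 0$ on the nullcone of $P$. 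At such a point the dissipation inequality alone only gives $\dot V(x_b) \leq \epsilon |x_b|^2$, which has the wrong sign, so the inertia constraint on $P$ must be brought in.

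The route I would take is to combine the LMI with the spectral splitting of Proposition \ref{prop:equivalences}. Strict $p$-dominance furnishes complementary $A$-invariant subspaces $\realn = E_p \oplus E_{n-p}$ together with the rate separation $\lambda_p < \lambda < \lambda_{n-p}$. In a basis adapted to this splitting $A$ is block-diagonal, and the diagonal blocks of \eqref{eq:LMI-dominance} reduce to two smaller LMIs, one for the antistable block $A|_{E_p} + \lambda I$ and one for the stable block $A|_{E_{n-p}} + \lambda I$. Standard Lyapunov arguments applied to each, together with the global inertia $(p, 0, n-p)$ of $P$, force the restrictions $P|_{E_p}$ and $P|_{E_{n-p}}$ to be negative and positive definite respectively. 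This yields $E_p \subset \{V < 0\} \cup \{0\}$ and $E_{n-p} \cap \calK = \{0\}$.

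Given this splitting, any $x_0 \in \partial \calK \setminus \{0\}$ decomposes as $x_p(0) + x_{n-p}(0)$ with $x_p(0) \neq 0$ (else $x_0 \in E_{n-p} \cap \calK = \{0\}$). By the rate separation, the non-dominant component $x_{n-p}(t)$ is strictly dominated by $x_p(t) \in E_p \subset \{V < 0\}$ as $t$ grows, so $V(x(t)) < 0$ for all sufficiently large $t$. A continuity argument combined with the strictness of the LMI on the nullcone of $P$ extends this to all $t > 0$, giving strict positivity. The main technical obstacle is the inertia analysis of $P|_{E_p}$ and $P|_{E_{n-p}}$: the diagonal LMI blocks bound these restrictions only from above, and the matching between the local and global signatures has to be extracted from the indefinite character of the whole LMI. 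Once those signs are pinned down, the rest reduces to a standard projective-contraction argument driven by the rate gap.
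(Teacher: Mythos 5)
Your proposal stalls on a misreading of Definition \ref{def:LMI-dominance} and then fails to close the argument at the decisive step. The sign of $\epsilon$ in \eqref{eq:LMI-dominance} is a typographical slip: read literally, $A^TP+PA\le -2\lambda P+\epsilon I$ with ``strict'' meaning $\epsilon>0$ would be vacuous (any $A$ and any $P$ with the right inertia satisfy it for $\epsilon$ large enough) and the proposition would be false. The intended inequality, as fixed by the conic constraint \eqref{eq:internal} (expanding it along $\dot x=Ax$ gives $\dot V+2\lambda V+\epsilon|x|^2\le 0$ for $V(x)=x^TPx$), by \eqref{eq:p-dissipativity}, and by how \eqref{eq:LMI-dominance} is used in the appendix, is $A^TP+PA\le-2\lambda P-\epsilon I$. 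Under that reading the ``wrong sign'' obstacle you raise disappears, and the paper's proof is exactly the one-line argument you discarded: on the nullcone $V(x)=0$, $x\neq 0$, one has $\dot V(x)\le -2\lambda V(x)-\epsilon|x|^2<0$ (using $\lambda\ge 0$), so $e^{2\lambda t}V(x(t))$ is strictly decreasing and $V(x(t))<0$ for all $t>0$, which yields both invariance of $\calK$ and the mapping of nonzero boundary points into the interior.

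The alternative route you sketch does not, as written, prove the statement. The block analysis is fine, and easier than you fear: in coordinates adapted to $E_p\oplus E_{n-p}$ the $(1,1)$ and $(2,2)$ blocks of the LMI are Lyapunov inequalities for the antistable and stable parts of $A+\lambda I$, so the standard inertia argument forces $P|_{E_p}<0$ and $P|_{E_{n-p}}>0$ with no appeal to the global inertia of $P$; this is not the ``main technical obstacle''. The real gap is at the end: asymptotic dominance of $x_p(t)$ only gives $V(x(t))<0$ for all sufficiently large $t$, and a ``continuity argument'' cannot upgrade this to all $t>0$, since continuity with $V(x(0))=0$ is perfectly compatible with $V>0$ on an intermediate time interval. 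The only tool you invoke to exclude such excursions is ``the strictness of the LMI on the nullcone'', which is precisely the pointwise inequality $\dot V<0$ on $\{V=0\}\setminus\{0\}$ that you declared unavailable at the outset; so your argument is either circular on your own premises or incomplete. Once the sign of $\epsilon$ is corrected, that pointwise inequality is available, it is the whole proof, and the spectral-splitting detour becomes unnecessary.
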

\begin{proof}
We have to prove that any vector on the boundary 
of the cone $V(x) := x^TPx = 0$ is mapped in the cone. 
Clearly,
$\dot{V}(x) = x^T (A^T P + P A) x < -2\lambda x^T P x = 0$,
which shows the invariance of $\calK$. It also shows that any
nonzero vector on the boundary is mapped into the interior of $\calK$,
for $t > 0$.
\end{proof}

For $p=1$, \eqref{eq:LMI-dominance} expresses the contraction of an ellipsoidal cone.
The $1$-dimensional dominant subspace is spanned by 
the Perron-Frobenius eigenvector.
For $p>1$, $p$-dominance is also a positivity property with 
respect to higher order cones 
\cite{Fusco1991,Sanchez2009,Sanchez2010,Mostajeran2017}.

The reader will notice that there is an important distinction between positivity in the sense of 
Proposition \ref{prop:positivity} and dominance in the sense of Definition \ref{def:LMI-dominance}. 
For $p>0$, any dominant system is strictly positive but the converse is not true. This is due to the extra requirement
of a {\it nonnegative} dissipation rate $\lambda \ge 0$ in the definition of $p$-dominance. The difference is significant
because if affects the type of contraction associated to each property. From Proposition \ref{prop:equivalences},  $p$-dominance is a form
 of {\it horizontal} contraction in the sense of \cite{Forni2014}: the vector space is splitted into a vertical space of dimension $p$
and a horizontal space of dimension $n-p$; contraction is imposed  in the horizontal space only. This property requires a nonnegative
dissipation rate $\lambda \ge 0$. In contrast, positivity is only a form of {\it projective} contraction, which does not require a nonnegative dissipation
rate. For $p=1$ the projective contraction is captured by the contraction
of the Hilbert metric \cite{Bushell1973,Pratt1982,Bonnabel2011}.
The next proposition provides
a projective contraction measure for a general $p$.

 \begin{proposition}
For any given $p$-dominant system of dimension $n$, 
there exist positive semidefinite matrices $P_u$ and $P_s$
of rank $p$ and $n-p$ respectively such that,
given $|x|_u := \sqrt{x^T P_u x}$ and $|x|_s := \sqrt{x^T P_s x}$,
the ratio $ |x(t)|_s / |x(t)|_u $ is 
exponentially decreasing
along any trajectory $x(\cdot)$ of the system from $|x(0)|_u \neq 0$.
\end{proposition}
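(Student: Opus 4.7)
The plan is to build $P_u, P_s$ directly from the invariant splitting $\real^n = E_p \oplus E_{n-p}$ provided by condition (2) of Proposition \ref{prop:equivalences}, rather than from the sign-indefinite matrix $P$. First I would take the (oblique) projections $\Pi_p$ and $\Pi_{n-p} = I - \Pi_p$ onto $E_p$ and $E_{n-p}$ along each other, and set
\[
P_u := \Pi_p^T \Pi_p, \qquad P_s := \Pi_{n-p}^T \Pi_{n-p}.
\]
Both matrices are positive semidefinite by construction, and since $\mathrm{rank}(M^T M)=\mathrm{rank}(M)$ for any real $M$, they have rank $p$ and $n-p$ respectively, as required. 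With this choice, $|x|_u = |\Pi_p x|$ and $|x|_s = |\Pi_{n-p} x|$, so the two pseudonorms coincide with the Euclidean norms of the components $x_p, x_{n-p}$ that appear in Proposition \ref{prop:equivalences}.

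Next, invariance of the splitting under $e^{At}$ implies that $x_p(t) = \Pi_p x(t)$ and $x_{n-p}(t) = \Pi_{n-p} x(t)$ are themselves solutions of the restricted linear flows on $E_p$ and $E_{n-p}$ starting from $x_p(0)$ and $x_{n-p}(0)$. The exponential bounds from Proposition \ref{prop:equivalences}(2) then apply unchanged:
\[
|x_p(t)| \geq C_p\, e^{-\lambda_p t} |x_p(0)|, \qquad |x_{n-p}(t)| \leq C_{n-p}\, e^{-\lambda_{n-p} t} |x_{n-p}(0)|.
\]
Dividing the second inequality by the first (which is strictly positive whenever $|x(0)|_u \neq 0$) yields
\[
\frac{|x(t)|_s}{|x(t)|_u} \;\leq\; \frac{C_{n-p}}{C_p}\; e^{-(\lambda_{n-p}-\lambda_p)\, t}\; \frac{|x(0)|_s}{|x(0)|_u},
\]
and since $\lambda_p < \lambda_{n-p}$, this exhibits exponential contraction of the ratio at the spectral-gap rate $\lambda_{n-p} - \lambda_p > 0$.

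I do not anticipate a genuine technical obstacle, since essentially everything has already been packaged in Proposition \ref{prop:equivalences}. The only delicate conceptual point is that the relevant decay rate of the ratio is not the dissipation rate $\lambda$ in Definition \ref{def:LMI-dominance}, but rather the width of the spectral gap of $A$ around $-\lambda$ that separates dominant from transient modes. An alternative attempt staying within the LMI spirit --- writing $P = P_s - P_u$ and trying to propagate $A^T P + P A \leq -2\lambda P$ into a differential inequality for $\log(|x|_s^2/|x|_u^2)$ --- runs into the difficulty that the individual quadratic forms $x^T(A^T P_s + P_s A)x$ and $x^T(A^T P_u + P_u A)x$ are not separately controlled by the LMI. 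The projection-based construction above sidesteps this issue cleanly and makes the rank constraints transparent.
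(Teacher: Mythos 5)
Your proof is correct, but it takes a genuinely different route from the paper. You build $P_u$ and $P_s$ as Gram matrices of the oblique spectral projections onto $E_p$ and $E_{n-p}$ and then quote the trajectory bounds of Proposition~\ref{prop:equivalences}(2) verbatim, so the ratio decays at the explicit spectral-gap rate $\lambda_{n-p}-\lambda_p$. The paper instead stays in the Lyapunov/comparison framework: from condition (1) of Proposition~\ref{prop:equivalences} it constructs rank-$p$ and rank-$(n-p)$ matrices satisfying the one-sided Lyapunov inequalities $A^TP_u+P_uA \geq (-2\lambda+\varepsilon)P_u$ and $A^TP_s+P_sA \leq (-2\lambda-\varepsilon)P_s$, then applies the comparison theorem to $U(x)=x^TP_ux$ and $S(x)=x^TP_sx$ to get decay of $S/U$ at rate $2\varepsilon$. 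Note that the ``alternative LMI-spirit attempt'' you dismiss at the end is essentially what the paper does, except that the paper never tries to extract the two separate inequalities from the single LMI for $P$ (which is the obstruction you correctly identify); it obtains them directly from the eigenvalue splitting of $A+\lambda I$, exactly as you obtain your projections. Your version is more elementary given Proposition~\ref{prop:equivalences}(2) and makes the gap rate transparent; the paper's version keeps the argument entirely within quadratic differential inequalities, which is the form that carries over to the dissipativity and differential (nonlinear) extensions later in the paper. Both proofs, like the paper's, implicitly use the strict ($\epsilon>0$) form of dominance needed to invoke Proposition~\ref{prop:equivalences}, and your hypothesis $|x(0)|_u\neq 0$ correctly guarantees the denominator never vanishes.
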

\begin{proof}
By Proposition \ref{prop:equivalences}, $A+\lambda I$ has $p$ unstable eigenvalues
and $n-p$ stable eigenvalues. 
Thus, 
there exist matrices $P_u$ and $P_s$ of rank $p$ and $n-p$ respectively,
and a small $\varepsilon > 0$,  such that
$
A^T P_u + P_u A  \geq (-2 \lambda + \varepsilon) P_u 
$
and 
$
A^T P_s + P_s A \leq (-2 \lambda- \varepsilon) P_s 
$ 
Define $U(x) = x^T P_u x$ and $S(x) = x^T P_s x$. Then,
by comparison theorem, along any trajectory of the system we have
$
U(x(t)) \geq e^{(-2 \lambda + \varepsilon)t} U(x(0))
$
and
$
S(x(t)) \leq e^{(-2 \lambda - \varepsilon)t} S(x(0))
$.
It follows that
\begin{equation}
S(x(t)) / U(x(t)) \leq e^{-2 \varepsilon t} S(x(0)) / U(x0))
\end{equation}
which guarantees that the ratio $ |x(t)|_s / |x(t)|_u $ is strictly decreasing
and converges to zero as $t \to \infty$. 
\end{proof}

\section{$p$-dissipativity}
\label{sec:p-dissipative}
The internal property of $p$-dominance is captured by the
matrix inequality \eqref{eq:LMI-dominance} which enforces a
conic constraint between the state of the system 
and its derivative of the form
\begin{equation}
\label{eq:internal}
\mymatrix{c}{\!\!\dot x \!\! \\ \!\! x \!\!}^T \!
\mymatrix{cc}{
0 & P \\ P & 2\lambda P + \varepsilon I
}
\mymatrix{c}{\!\!\dot x \!\! \\ \!\! x \!\!}
\leq 0
\end{equation}
where $P$ is a matrix with inertia $(p,0,n-p)$
and $\varepsilon > 0$. 
A system is  $p$-dominant  if  the linear relationship between
$\dot{x}$ and $x$ satisfy \eqref{eq:internal}.  

Dissipativity theory  extends $p$-dominance to {\it open} systems
by augmenting the internal dissipation inequality with an external supply.
The external property of  \emph{$p$-dissipativity} is captured by a
conic constraint between the state of the system $x$, its derivative $\dot{x}$,
and the external variables $y$ and $u$ of the form
\begin{equation}
\label{eq:p-dissipativity}
\mymatrix{c}{\!\!\!\dot x\!\!\! \\ \!\!\!x\!\!\!}^T \!\!
\mymatrix{cc}{
0 & P \\ P & 2\lambda P + \varepsilon I
}\!
\mymatrix{c}{\!\!\!\dot x\!\!\! \\ \!\!\!x\!\!\!}
\leq
\mymatrix{c}{\!\!\!y\!\!\! \\ \!\!\!u\!\!\!}^T \!\!
\mymatrix{cc}{
\!Q\! & \!L\! \\ \!L^T\! & \!R\!
}\!
\mymatrix{c}{\!\!\!y\!\!\! \\ \!\!\!u\!\!\!} 
\end{equation}
where $P$ is a matrix with inertia $(p,0,n-p)$, $\lambda \ge 0 $,
$L, Q,R$ are matrices of suitable dimension,
and $\varepsilon \geq 0$. The property is {\it strict} if $\epsilon >0$.
We call supply rate
$s(y,u) := y^T Q y + y^TL u + u^T L^T y + u^T R u$ the right-hand side of \eqref{eq:p-dissipativity}.
An open dynamical system  
is \emph{$p$-dissipative with rate $\lambda$} if its
dynamics $\dot{x} = Ax + Bu$, $y=Cx + Du$ satisfy \eqref{eq:p-dissipativity}
for all $x$ and $u$.

$p$-dissipativity has a simple characterization in terms of 
matrix inequalities.
\begin{proposition}
\label{prop:LMI-dissipavitity}
A linear system $\dot{x} = Ax + Bu$, $y=Cx + Du$ is $p$-dissipative with rate $\lambda$
if and only if there exist a 
symmetric matrix $P$ with inertia $(p,0,n-p)$ such that
\begin{equation}
\label{eq:LMI-dissipativity}
\smallmat
{\!
A^T\!  P + P A + 2\lambda P -C^T\! Q C + \varepsilon I \!&\! P B - C^T\!L - C^T\! Q D \! \\
 \! B^T P - L^T C - D^T Q C \!&\! -D^T Q D - L^TD - D^T L - R \!
 }
 \leq 0\, .
\end{equation}
\end{proposition}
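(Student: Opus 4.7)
The plan is to prove this by direct substitution of the system dynamics into the $p$-dissipativity inequality \eqref{eq:p-dissipativity}, and then to reformulate the resulting pointwise quadratic inequality in $(x,u)$ as a single matrix inequality. Both the LMI and the dissipation inequality are quadratic in the data, so the equivalence will reduce to matching coefficients.

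First, I would expand the left-hand side of \eqref{eq:p-dissipativity} using $\dot{x}=Ax+Bu$. Since the $(1,1)$ block of the inner matrix is zero, the $\dot{x}^T 0 \dot{x}$ term vanishes, and the cross term $2\dot{x}^T P x$ becomes $2x^T A^T P x + 2u^T B^T P x$. Combining with $x^T(2\lambda P + \varepsilon I)x$ yields a quadratic form in $(x,u)$ whose $(1,1)$ block is $A^T P + PA + 2\lambda P + \varepsilon I$, $(1,2)$ block is $PB$, and $(2,2)$ block is $0$.

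Next, I would expand the right-hand side using $y=Cx+Du$. The supply $y^T Q y + 2 y^T L u + u^T R u$ expands into a quadratic form in $(x,u)$ with $(1,1)$ block $C^T Q C$, $(1,2)$ block $C^T Q D + C^T L$, and $(2,2)$ block $D^T Q D + D^T L + L^T D + R$. Subtracting RHS from LHS and collecting terms yields exactly the matrix in \eqref{eq:LMI-dissipativity}, acting on the vector $(x,u)$.

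At this point the equivalence is immediate: $p$-dissipativity requires that this scalar quadratic form be $\leq 0$ for \emph{every} $(x,u) \in \real^n \times \real^m$, which holds if and only if the symmetric matrix in \eqref{eq:LMI-dissipativity} is negative semidefinite; the inertia requirement on $P$ coincides in both statements. I would remark that the forward direction (LMI $\Rightarrow$ dissipativity) is the substitution itself, while the reverse direction uses that a quadratic form that is non-positive on all of $\real^{n+m}$ must come from a negative semidefinite matrix. No obstacle is expected beyond careful bookkeeping of signs in the cross terms involving $L$ and $C^T Q D$; everything else is routine algebra.
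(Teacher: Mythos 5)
Your proposal is correct and follows essentially the same route as the paper: substitute $\dot{x}=Ax+Bu$, $y=Cx+Du$ into \eqref{eq:p-dissipativity}, collect the resulting quadratic form in $(x,u)$, and use that a symmetric quadratic form is nonpositive on all of $\real^{n+m}$ if and only if the matrix in \eqref{eq:LMI-dissipativity} is negative semidefinite. The sign bookkeeping you describe matches the paper's LMI, so no gap remains.
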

\begin{proof}
[$\Rightarrow$]
Just replace $\dot{x} = Ax+Bu$ and $y = Cx+Du$ in \eqref{eq:p-dissipativity}
and rearrange. [$\Leftarrow$] Multiply \eqref{eq:LMI-dissipativity} by
$[x^T \, u^T]$ on the left, and by $[x^T \, u^T ]^T$ on the right. Then we get
$
\dot{x}^T P x + x^T P \dot{x} +2\lambda x^T Px <  s(y,u)
$
 as desired.
\end{proof}

An interconnection theorem can be easily derived.
\begin{proposition}
\label{thm:dissipativity}
Let $\Sigma_1$ and $\Sigma_2$ 
$p_1$-dissipative and $p_2$-dissipative
systems respectively, with
uniform rate $\lambda$ and 
with supply rate
\begin{equation}
s_i(y_i,u_i) = \mymatrix{c}{\!\!y\!\! \\ \!\!u\!\!}^T \!
\mymatrix{cc}{
Q_i & L_i \\ L_i^T & R_i
}
\mymatrix{c}{\!\!y\!\! \\ \!\!u\!\!} 
\end{equation}
for $i \in \{1,2\}$. The closed-loop system given 
by negative feedback interconnection
\begin{equation}
\label{eq:feedback_interconnection}
u_1 = -y_2 + v_1 \qquad u_2 = y_1 + v_2
\end{equation}
is $(p_1+p_2)$-dissipative with rate $\lambda$
from $v = (v_1,v_2)$ to $y = (y_1,y_2)$
with supply rate 
\begin{equation}
\label{eq:cl_supply}
s(y,v) = 
\mymatrix{c}{\!\!\!y \!\!\! \\ \!\!\!v\!\!\!}^{\!T} \!\!
{\footnotesize 
\mymatrix{cc|cc}{
Q_1 + R_2 & -L_1 + L_2^T & L_1 & R_2 \\
-L_1^T + L_2 & Q_2  + R_1 & -R_1 & L_2 \\ \hline
L_1^T &  -R_1 & R_1 & 0 \\
R_2 & L_2^T & 0 & R_2 
}\!
}\!
\mymatrix{c}{\!\!\!y\!\!\! \\ \!\!\!v\!\!\!}  .
\end{equation}
The closed loop is $(p_1+p_2)$-dominant with rate $\lambda$ if 
\begin{equation}
\label{eq:dissipativity_interconnection}
\mymatrix{cc}{
Q_1 + R_2 & -L_1 + L_2^T \\ -L_1^T + L_2 & Q_2  + R_1
} 
\leq 0 \ . \vspace{-3mm}
\end{equation}
\end{proposition}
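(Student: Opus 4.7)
The natural candidate for a closed-loop storage matrix is the block-diagonal
$P = \mathrm{blockdiag}(P_1,P_2)$, where $P_i$ is the storage matrix certifying $p_i$-dissipativity of $\Sigma_i$ via Proposition \ref{prop:LMI-dissipavitity}. Because inertia is additive under direct sums, $P$ has inertia $(p_1+p_2,\, 0,\, n_1+n_2-p_1-p_2)$, which is exactly what $(p_1+p_2)$-dissipativity and $(p_1+p_2)$-dominance require. I would therefore work with the composite state $x=(x_1,x_2)$ and the quadratic form $V(x)=x^T P x = x_1^T P_1 x_1 + x_2^T P_2 x_2$.

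The plan is to add the two dissipation inequalities \eqref{eq:p-dissipativity} written for $\Sigma_1$ and $\Sigma_2$: summing yields
$$
\mymatrix{c}{\dot x \\ x}^{\!T}\!\mymatrix{cc}{0 & P \\ P & 2\lambda P + \varepsilon I}\!\mymatrix{c}{\dot x \\ x} \leq s_1(y_1,u_1) + s_2(y_2,u_2),
$$
with a common rate $\lambda$ by hypothesis. The left-hand side has the exact form required by Definition of $p$-dissipativity with storage $P$. It then only remains to substitute the feedback law \eqref{eq:feedback_interconnection}, i.e. $u_1 = -y_2+v_1$ and $u_2 = y_1 + v_2$, into the sum of supply rates and rearrange the result as a quadratic form in $(y,v)$.

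The core of the proof is this rearrangement, which is a routine but careful book-keeping exercise. Each cross term $y_i^T L_i u_i$ in $s_i$ splits into a $y$-$y$ piece and a $y$-$v$ piece; each $u_i^T R_i u_i$ splits into $y$-$y$, $y$-$v$ and $v$-$v$ pieces. Collecting the purely $y$-quadratic contributions produces the leading $2\times 2$ block $\smallmat{Q_1+R_2 & -L_1+L_2^T \\ -L_1^T+L_2 & Q_2+R_1}$; the $y$-$v$ blocks produce the off-diagonal entries $L_1, R_2, -R_1, L_2$; and the $v$-quadratic part collapses to $\mathrm{blockdiag}(R_1,R_2)$ because $v_1$ and $v_2$ enter $\Sigma_1$ and $\Sigma_2$ respectively and never interact. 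This reconstructs \eqref{eq:cl_supply} and establishes $(p_1+p_2)$-dissipativity of the closed loop.

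For the last claim, I would specialize to the autonomous case $v=0$: the closed-loop supply then reduces to
$$
s(y,0) = \mymatrix{c}{y_1 \\ y_2}^{\!T}\!\mymatrix{cc}{Q_1+R_2 & -L_1+L_2^T \\ -L_1^T+L_2 & Q_2+R_1}\!\mymatrix{c}{y_1 \\ y_2},
$$
which is non-positive precisely under the hypothesis \eqref{eq:dissipativity_interconnection}. The composite dissipation inequality then becomes the internal inequality \eqref{eq:internal} for $P$ with rate $\lambda$ and $\varepsilon>0$, so by Definition \ref{def:LMI-dominance} the closed loop is $(p_1+p_2)$-dominant with rate $\lambda$. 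I do not anticipate a conceptual obstacle; the only thing to watch is the symmetrization of off-diagonal blocks (e.g. verifying that the contribution $-y_1^T L_1 y_2 - y_2^T L_1^T y_1$ gathers with $y_2^T L_2 y_1 + y_1^T L_2^T y_2$ to give precisely $-L_1+L_2^T$ and its transpose), which is exactly what makes the uniform rate assumption indispensable for combining the two inequalities into a single storage-based inequality.
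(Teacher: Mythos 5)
Your proposal is correct and follows essentially the same route as the paper's proof: take the block-diagonal storage $P=\smallmat{P_1 & 0 \\ 0 & P_2}$ (whose inertia is $(p_1+p_2,0,n_1+n_2-p_1-p_2)$), sum the two dissipation inequalities, substitute the interconnection to recover the closed-loop supply \eqref{eq:cl_supply}, and then set $v=0$ so that \eqref{eq:dissipativity_interconnection} makes the supply nonpositive, yielding the internal inequality and hence $(p_1+p_2)$-dominance. Your expanded book-keeping of the supply-rate rearrangement is consistent with \eqref{eq:cl_supply}, so no gap.
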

\begin{proof}
By standard steps on interconnection of dissipative systems.
Define $P := \smallmat{P_1 & 0 \\ 0 & P_2}$ which, by construction,
has inertia $(p_1+p_2, 0, n_1+n_2 - p_1-p_2)$,
where $n_i$ is the dimension of the state of system $i\in\{1,2\}$.
Then, a simple computation shows that 
the closed-loop system given by \eqref{eq:feedback_interconnection},
satisfies \eqref{eq:p-dissipativity} from $v$ to $y$
with supply rate \eqref{eq:cl_supply}.
Furthermore, denote by $S$ the left-hand side of
\eqref{eq:dissipativity_interconnection} 
and take $v = 0$. Then, using the aggregate state $x=(x_1,x_2)$,
\eqref{eq:dissipativity_interconnection} guarantees 
$p$-monotonicity since
$$
\mymatrix{c}{\!\!\!\dot x\!\!\! \\ \!\!\!x\!\!\!}^T \!\!\!
\smallmat{
0 & P \\ P & \, 2\lambda P + \varepsilon I
}\!
\mymatrix{c}{\!\!\!\dot x\!\!\! \\ \!\!\!x\!\!\!}
\leq
\mymatrix{c}{\!\!\!y_1 \!\!\! \\ \!\!\!y_2\!\!\!}^T \!\!\!
S
\mymatrix{c}{\!\!\!y_1\!\!\! \\ \!\!\!y_2\!\!\!} \leq 0 \ .
\vspace{-5mm}
$$
\end{proof}

For $p_1 = p_2 = 0$ and $\lambda=0$ the
proposition reduces to the standard interconnection
theorem for dissipative systems, \cite{Willems1972a}.
Like classical dissipativity is related
to the internal property of stability, $p$-dissipativity
is tightly related to the internal property of 
$p$-dominance. Thus, for uniform rate $\lambda$, 
Proposition \ref{thm:dissipativity} provides interconnection conditions that lead
to $p$-dominant closed loop systems. 

\begin{example}
\label{ex:linear_msd2}
We say that
a system is \emph{$p$-passive with rate $\lambda$}
when the dissipation inequality \eqref{eq:p-dissipativity}
is satisfied with supply rate 
\begin{equation}
\label{eq:passive}
s(y,u) = \mymatrix{c}{\!\!y\!\! \\ \!\!u\!\!}^T \!
\mymatrix{cc}{
0 & I \\ I & 0
}
\mymatrix{c}{\!\!y\!\! \\ \!\!u\!\!}  \ .
\end{equation}
For $D=0$, taking $\varepsilon = 0$ for simplicity,
\eqref{eq:LMI-dissipativity} reduces to 
\begin{equation}
\label{eq:passivity}
\left\{
\begin{array}{rcl}
A^T  P + P A   &\leq& - 2\lambda P \\
PB &=& C^T \ .
\end{array}
\right.
\end{equation}
For the open mass-spring system
$
\dot{x} = \smallmat{0 & 1 \\ -1 & -8} x + \smallmat{0 \\ 1} u
$, 
$y = \smallmat{0 & 1} x$, the matrix
$
P =  \smallmat{ -1  &  0 \\  0  &  1}
$
satisfies \eqref{eq:passivity}
with $\lambda = 1.2679$, which gives $1$-passivity from $u$ to $y$.
Proposition \ref{thm:dissipativity}
guarantees that the negative feedback interconnection $u = - k y+v$ 
is $1$-passive system  with rate $\lambda$, for any gain $k \geq 0$.
Indeed, the closed-loop system is
$1$-dominant for any $k\geq 0$.

A generalized small gain theorem can also be illustrated.
We say that a system has \emph{$p$-gain $\gamma$ from $u$ to $y$ 
with rate $\lambda$}
when the dissipation inequality \eqref{eq:p-dissipativity}
is satisfied with supply rate 
\begin{equation}
\label{eq:L2}
s(y,u) = \mymatrix{c}{\!\!y\!\! \\ \!\!u\!\!}^T \!
\mymatrix{cc}{
-I & 0 \\ 0 & \gamma^2 I
}
\mymatrix{c}{\!\!y\!\! \\ \!\!u\!\!}  \ .
\end{equation}
For the open mass-spring system,
the matrix $P$ above satisfies \eqref{eq:p-dissipativity}
with $L = -I$, $Q=0$ and $R = \gamma^2 I$ for $\gamma = 0.3$.
Thus, Proposition \ref{thm:dissipativity} guarantees that 
the closed loop given by the feedback interconnection $u=\pm k y$
is $1$-dominant for any $-3.3 < k < 3.3$.

The resulting internal dominance is the analog of the resulting stability
in classical applications of the passivity theorem and of the small gain theorem
\cite{Sepulchre1997,VanDerSchaft1999}.
\end{example}

\section{Differential analysis and $p$-dominance}
\label{sec:diff-p-dominance}
$p$-dominance and $p$-dissipativity are not limited to linear systems.
In nonlinear analysis 
it makes sense to study these properties  infinitesimally or differentially, see e.g.
\cite{Forni_Sepulchre_tutorial_cdc2014}.
In what follows we will see how $p$-dominance of the system linearization restricts
the asymptotic behavior of the nonlinear system. 

The nonlinear system $\dot{x} = f (x)$ ($f$ smooth, 
$x\in \realn$) is (differentially)  \emph{$p$-dominant with rate $\lambda \ge 0$} and constant storage $P=P^T$ if
the prolonged system \cite{Crouch1987}
\begin{equation}
\label{eq:prolonged}
\dot{x} = f(x) \qquad \dot{\delta x} = \partial f(x) \delta x
\end{equation}
satisfies the conic constraint
\begin{equation}
\label{eq:diff-internal}
\mymatrix{c}{\!\!\dot {\delta x}\!\! \\ \!\!\delta x\!\!}^T \!
\mymatrix{cc}{
0 & P \\ P & 2\lambda P + \varepsilon I
}
\mymatrix{c}{\!\!\dot {\delta x}\!\! \\ \!\!\delta x\!\!}
\leq 0
\end{equation}
for every $\delta x \in \realn$, 
where $P$ is a matrix with inertia $(p,0,n-p)$
and  $\varepsilon \ge 0$. The property is strict if $\epsilon >0$.
From \eqref{eq:diff-internal}, the reader will recognize that 
differential dominance is just dominance of the linearized dynamics.
In this paper, we only consider the case of a {\it constant} matrix $P$, but generalizations
might be considered.

Differential dominance for $p=0$ is another synonym of   differential stability, or  contraction, or convergence 
\cite{Lohmiller1998,Pavlov2005,Russo2010,Forni2014}.
The trajectories of the nonlinear system converge {\it exponentially} towards each other.
In fact, \eqref{eq:diff-internal} reduces to the inequality
\begin{equation}
\label{eq:contraction}
\partial f(x)^T P + P \partial f(x) \le -\epsilon I \qquad \forall x\in \realn
\end{equation}
where $P$ is a positive definite matrix. This is a 
typical condition for contraction with respect to a  constant (Riemannian) metric
$P$ \cite{Forni2014}. The following proposition is a 
straightforward consequence of \eqref{eq:contraction}.

\begin{proposition}
\label{thm:contraction} If $\dot x=f(x)$ is strictly $0$-dominant, then all solutions
exponentially converge to a unique fixed point.
\end{proposition}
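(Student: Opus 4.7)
The plan is to deduce the proposition from the standard contraction argument, exploiting that for $p=0$ the constant storage $P$ of inertia $(0,0,n)$ is positive definite. First I would observe that strict $0$-dominance collapses to \eqref{eq:contraction}: setting $V(\delta x) := \delta x^T P \delta x$ and differentiating along the prolonged system \eqref{eq:prolonged} gives $\dot V \leq -\varepsilon |\delta x|^2 \leq -\mu V$ with $\mu := \varepsilon/\lambda_{\max}(P) > 0$, so every variational vector decays exponentially, $V(\delta x(t)) \leq e^{-\mu t} V(\delta x(0))$.

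Next I would lift this infinitesimal decay to a global incremental contraction between solutions. Given initial conditions $a,b \in \realn$, connect them by the straight segment $\gamma_s(0) := a + s(b-a)$, $s \in [0,1]$, and propagate it by the flow $\phi_t$. The tangent field $\delta x(s,t) := \partial_s \phi_t(\gamma_s(0))$ solves the variational equation along $\phi_t(\gamma_s(0))$, so the pointwise bound above applies uniformly in $s$. Since $P$ is constant, the weighted length $L(t) := \int_0^1 \sqrt{\delta x(s,t)^T P \delta x(s,t)}\,ds$ coincides at $t=0$ with the weighted Euclidean distance $d(a,b) := \sqrt{(a-b)^T P(a-b)}$, and integrating the pointwise bound together with the triangle inequality yields $d(\phi_t(a),\phi_t(b)) \leq L(t) \leq e^{-\mu t/2} d(a,b)$ for all $t \geq 0$.

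The proposition now follows by a short Cauchy argument. Pick any solution $x(\cdot)$ and any $h > 0$: by autonomy $x(t+h) = \phi_t(x(h))$, so $d(x(t+h),x(t)) \leq e^{-\mu t/2} d(x(h),x(0))$, which is integrable in $t$. Hence $\{x(t)\}_{t \geq 0}$ is Cauchy and converges to some $x^\star$, and continuity of $f$ combined with $\dot x = f(x)$ forces $f(x^\star)=0$. Any second fixed point would sit at constant distance from $x^\star$, violating the contraction estimate, so $x^\star$ is unique. Applying the estimate to $x(t)$ and the constant trajectory $x^\star$ yields the exponential convergence $d(x(t),x^\star) \leq e^{-\mu t/2} d(x(0),x^\star)$.

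The main obstacle is the middle step, passing from the infinitesimal inequality \eqref{eq:contraction} to the incremental contraction of pairs of full trajectories. The constancy of $P$ makes this step transparent, since the linear change of coordinates $x \mapsto P^{1/2} x$ turns the weighted distance into the standard Euclidean one and reduces everything to the classical contraction argument of \cite{Lohmiller1998}; once this bridge is in place, existence, uniqueness, and exponential convergence to the fixed point all fall out simultaneously.
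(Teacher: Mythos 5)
Your argument is correct, and it reaches the conclusion by the same underlying mechanism as the paper --- global incremental contraction of the flow in the constant metric $d_P(x_1,x_2)=((x_1-x_2)^TP(x_1-x_2))^{1/2}$ --- but it finishes differently and is more self-contained. The paper cites \cite{Pavlov2005,Forni2014} for the passage from the infinitesimal inequality \eqref{eq:contraction} to the incremental estimate, and then invokes the Banach fixed point theorem for the time-$t$ map $\psi^t$ to obtain existence and uniqueness of the fixed point (implicitly using that the fixed points of $\psi^t$ for different $t$ coincide and yield an equilibrium of $f$). You instead (i) prove the infinitesimal-to-incremental lift explicitly, by propagating the straight segment between two initial conditions and bounding the $P$-weighted length of its image --- which is exactly where the constancy of $P$ pays off, since the length of the initial segment equals $d_P(a,b)$ --- and (ii) replace Banach's theorem by a direct Cauchy argument along a single trajectory, identifying the limit as an equilibrium and getting uniqueness and the exponential rate from the same contraction estimate. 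What your route buys is transparency (no appeal to external contraction results, no need to reconcile the fixed points of the different maps $\psi^t$); what the paper's route buys is brevity. Two small points you should tighten if you write this out: the step ``$d(x(t+h),x(t))$ is integrable in $t$, hence $x(\cdot)$ is Cauchy'' should be replaced by the standard chaining over intervals of length $h$ together with $\sup_{s\in[0,h]}d_P(x(s),x(0))<\infty$ (integrability of the increments alone is not the mechanism); and the claim $f(x^\star)=0$ deserves one line, e.g. if $f(x^\star)\neq 0$ then $\dot x(t)=f(x(t))\to f(x^\star)$ would preclude convergence of $x(t)$, or equivalently note that the contraction estimate makes $\phi_h$ continuous and fix $\phi_h(x^\star)=x^\star$ for every $h\geq 0$. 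Both proofs, like the statement itself, tacitly assume forward completeness of the flow, so this is not a defect relative to the paper.
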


\begin{proof}
Denote by $\psi^t(x)$ the semiflow of $\dot{x} = f(x)$,
characterizing the trajectory of the system at time $t$ from the initial condition 
$x$ at initial time zero. Then, 
contraction guarantees that 
the distance $d_P(x_1,x_2) :=  ((x_1-x_2)^TP(x_1-x_2))^{1/2}$ 
is exponentially decreasing along any pair of trajectories $x_1(\cdot),x_2(\cdot)$, \cite{Pavlov2005,Forni2014}. 
Indeed, for any $t >0$, $\psi^t(\cdot)$ is a contraction mapping on $\realn$
with respect to $d_P$.
By Banach fixed point theorem
$\psi^t(\cdot)$ admits a unique fixed-point $x^*$ in $\realn$.
By contraction every trajectory converges to $x^*$ exponentially. 
\end{proof}
 
 For $p=1$, differential dominance is closely related to 
differential positivity with respect to a constant 
ellipsoidal cone $\calK\subseteq \realn$, \cite{Forni2016}.
A straightforward adaptation of 
Proposition \ref{prop:positivity} shows that 
the linearized trajectories of a differentially $1$-dominant system
map the boundary of the cone
$\calK := \{\delta x\in \realn\,|\, \delta x^T P \delta x \leq 0\}$
into its interior, as required by strict differential positivity.
We observe that
$\calK$ is the union of two pointed convex cones
$\calK = \calK_{+} \cup \calK_{-}$ such that
$\calK_{+} \cap \calK_{-} = \{0\}$\footnote{Consider 
any hyperplane $\calW \in \realn$ such that $\calW\cap \calK = \{0\}$.
Let $w$ be the normal vector to $\calW$
then
$\calK_{+} := \{\delta x\in\calK\,|\, w^T \delta x \geq 0 \}$ and
$\calK_{-} := \{\delta x\in\calK\,|\, w^T \delta x \leq 0 \}$ are solid, pointed and convex
cones.}.
Then,
differential 1-dominance also guarantees strict
differential positivity with respect to $\calK_{+}$
(and to $\calK_{-}$), which follows from 
strict differential positivity with respect to $\calK$ and from the fact that
the contact point between $\calK_{+}$ and $\calK_{-}$ is $\delta x= 0$,
which is a fixed point of the linearization.
Thus, differential $1$-dominance guarantees strict
differential positivity with respect to a constant
solid, pointed, convex cone in $\realn$, 
which leads to the following result.

\begin{proposition}
\label{thm:almost_convergence}
If $\dot x=f(x)$ is strictly $1$-dominant, then all bounded solutions
exponentially converge to some fixed point.
\end{proposition}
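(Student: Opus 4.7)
The plan is to reduce the claim to an application of the differential positivity theory of \cite{Forni2016}, supplemented by the extra contractive structure that 1-dominance provides beyond plain positivity. The paragraph preceding the statement has already identified strict differential 1-dominance with strict differential positivity of the prolonged system with respect to the constant, solid, pointed, convex cone $\calK_{+}$. The starting point is therefore the Perron-Frobenius theorem for strictly differentially positive systems on a compact forward-invariant set containing the bounded trajectory: the linearization admits a continuous Perron-Frobenius direction field $w(x) \in \mathrm{int}(\calK_{+})$, and any tangent variation in $\calK_{+}$ aligns asymptotically with $w(x(t))$ under the prolonged flow. Since $\calK = \calK_{+} \cup \calK_{-}$ generates a common $(n-1)$-dimensional transverse subspace, this alignment is generic.

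The second step upgrades projective alignment into genuine transverse contraction by exploiting the nonnegative rate $\lambda \ge 0$, which is the structural feature that distinguishes 1-dominance from mere positivity (cf.\ the discussion around Proposition~\ref{prop:positivity}). Pointwise, applying the splitting of Proposition~\ref{prop:equivalences} to $\partial f(x)$ with storage $P$ yields a one-dimensional dominant subspace (spanned by $w(x)$) and an $(n-1)$-dimensional transverse complement on which the linearized flow contracts exponentially at rate at least $\lambda$. Propagated along the bounded trajectory, this contraction forces any two nearby bounded trajectories to slide asymptotically onto a common integral curve of $w$ at exponential rate. Restricted to such a curve, the flow of $f$ is scalar and monotone; a bounded scalar monotone flow necessarily converges to a fixed point, and transverse contraction then lifts this scalar limit to an exponentially attracting fixed point of $f$ in $\realn$.

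The principal obstacle is the rigor of this scalar reduction. One must establish that the Perron-Frobenius direction field $w$ extends regularly along and around the bounded trajectory, that integral curves of $w$ are invariant under $f$ in the asymptotic regime, and that the exponential rate $\lambda$ from the linearization passes uniformly into a nonlinear distance between trajectories on compact sets. The behavior near equilibria is the most delicate point: at a candidate limit point $x^\star$, $w(x^\star)$ is the Perron-Frobenius eigenvector of $\partial f(x^\star)$, and one must argue that the associated dominant eigenvalue is nonpositive, since otherwise boundedness of the trajectory would be incompatible with being driven away along $w$. Once this nonpositivity is established, the $(n-1)$-dimensional transverse contraction intrinsic to strict 1-dominance combines with the dominant contraction at $x^\star$ to yield exponential convergence in $\realn$, completing the proof.
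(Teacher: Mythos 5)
Your route is essentially the one the paper itself takes: the published proof omits all details and simply states that the argument follows \cite[Corollary 5]{Forni2016}, i.e.\ exactly the differential-positivity/Perron--Frobenius machinery, strengthened by the nonnegative rate $\lambda \ge 0$ (transverse contraction), that you outline. The rigor issues you flag (regularity of the Perron--Frobenius direction field, the scalar reduction along its integral curves, behavior near equilibria) are precisely the details the paper delegates to that reference rather than resolving here.
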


\begin{proof}
The details of the proof are omitted but the the proof closely follows the arguments in  \cite[Corollary 5]{Forni2016}. 
\end{proof}

Differential $1$-dominance also closely relates to monotonicity 
\cite{Smith1995,Hirsch2003,Angeli2003,Hirsch2006}
with respect to the partial order $\preceq$ 
induced by $\calK_{+}$: $x \preceq y$ iff $y-x \in \calK_{+}$.
In fact, differential $1$-dominance guarantees that
any pair of trajectories $x_1(\cdot)$, $x_2(\cdot)$ of the
nonlinear system from ordered initial conditions
$x_1(0) \preceq x_2(0)$ satisfy
$x_1(t) \preceq x_2(t)$ for all $t\geq 0$,
as required by classical monotonicity
(a direct consequence of the relation with differential positivity \cite{Forni2016}).
In this sense, Proposition \ref{thm:almost_convergence} 
is the counterpart of  the well know property that
almost every bounded trajectory of a monotone system 
converges to a fixed point \cite{Hirsch1988,Smith1995}. The fact that the property
holds for {\it all } trajectories in differentially 1-dominant systems follows from
the nonnegative dissipation rate $\lambda \ge 0$, which is not necessary for monotonicity.

For $p=2$, differential dominance closely relates to the notion of 
monotonicity with respect rank-$2$ cones of \cite{Sanchez2009,
Smith1986}, see
\cite[Equations (7) and (8)]{Sanchez2009}. 
\begin{proposition}
\label{thm:limit_cycle}
If $\dot x=f(x)$ is strictly $2$-dominant, then all bounded solutions whose $\omega$-limit set does not
contain an equilibrium point  exponentially converge to a closed orbit.
\end{proposition}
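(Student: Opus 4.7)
The plan is to mirror the strategies used for Propositions \ref{thm:contraction} and \ref{thm:almost_convergence}: strict differential $p$-dominance should imply strict differential positivity with respect to a rank-$p$ quadratic cone, which places the system inside a class for which a Poincar\'e--Bendixson-type theorem already exists in the literature. The nonnegative dissipation rate $\lambda \ge 0$ will then be used in a second step to upgrade the resulting asymptotic convergence to exponential convergence, exactly as the inequality $\lambda \ge 0$ played for monotonicity in the $p=1$ case.

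First I would build the quadratic cone $\calK := \{\delta x \in \realn \mid \delta x^T P \delta x \le 0\}$, which is now a rank-$2$ cone (signature $(2,n-2)$). The same computation as in Proposition \ref{prop:positivity}, applied along any trajectory $x(\cdot)$ to the variational flow $\dot{\delta x} = \partial f(x(t)) \delta x$ from \eqref{eq:prolonged}, yields $\frac{d}{dt}(\delta x^T P \delta x) \le -2\lambda \delta x^T P \delta x - \varepsilon |\delta x|^2$. Hence every nonzero vector on the boundary of $\calK$ is pushed into the interior of $\calK$ in positive time, so the nonlinear system is strictly differentially positive with respect to the rank-$2$ cone $\calK$.

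Next I would invoke the Poincar\'e--Bendixson-type theorem for systems that are monotone with respect to cones of rank two, developed by Smith \cite{Smith1986} and Sanchez \cite{Sanchez2009,Sanchez2010}. Under precisely the rank-$2$ cooperativity encoded in our differential inequality (compare \cite[Eq.~(7)--(8)]{Sanchez2009}), the $\omega$-limit set of any bounded trajectory that does not contain an equilibrium is a nontrivial closed orbit, and nearby trajectories in its basin track it asymptotically. The translation from differential positivity along trajectories to the integral cooperativity hypothesis used by Sanchez is a direct adaptation of the arguments used in the proof of Proposition \ref{thm:almost_convergence} (following \cite{Forni2016}). The exponential rate is then obtained from the nonnegative rate $\lambda \ge 0$: applying the projective contraction proposition above to the time-varying linearization along a trajectory, the variational flow splits the tangent space into a $2$-dimensional dominant direction (aligned with the flow direction and the phase perturbation of the orbit) and an $(n-2)$-dimensional transverse direction along which $\delta x^T P_s \delta x$ decays exponentially at a rate controlled by $\lambda$ and $\varepsilon$. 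On the limit cycle this yields transverse Floquet multipliers strictly inside the unit disk, hence exponential attraction of every nearby solution onto the orbit.

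The main technical obstacle is the last step: passing from exponential squeezing of the variational flow in the transverse $(n-2)$-dimensional direction to exponential convergence of the original trajectory to the closed orbit, while accounting for the two-dimensional neutral direction parallel to the flow and to the orbit phase, and for the uniformity of the transverse contraction around the cycle. Standard Floquet and invariant-manifold machinery should suffice but must be combined with the global cone-positivity estimate to cover the whole basin rather than only a tubular neighborhood. By contrast, the rank-$2$ cone step and the appeal to Sanchez's Poincar\'e--Bendixson theorem are, once the conic inequality is rewritten along the prolonged system, essentially bookkeeping.
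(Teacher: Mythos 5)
Your proposal follows essentially the same route as the paper, whose ``proof'' is in fact only a deferral: it states that the details are left to an extended version and that the argument closely follows \cite[Theorem 1]{Sanchez2009}, i.e.\ exactly the rank-$2$ cone positivity plus Poincar\'e--Bendixson-type argument you outline. Your additional sketch of the exponential-attraction step (transverse contraction, Floquet multipliers) goes beyond what the paper records and is correctly flagged by you as the part still requiring detailed work, but the core approach coincides with the paper's.
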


\begin{proof}
The details of the proof are left to an extended version of this paper
but the the proof closely follows the arguments in  \cite[Theorem 1]{Sanchez2009}. 
\end{proof}

Proposition \ref{thm:limit_cycle} shows that
differentially $2$-dominant systems enjoy properties akin to the 
Poincare-Bendixson theory of planar systems.

\section{Differential analysis and $p$-dissipativity}
\label{sec:diff-p-dissipative}

In analogy with the previous section, we use
the prolonged system to define $p$-dissipativity in a  nonlinear setting.
For simplicity we will consider systems of the form 
$\dot{x} = f(x) + Bu$, $y=Cx$ 
($f$ is smooth, $x\in \realn$, $u\in \real^m$),
whose prolonged system \cite{Crouch1987} is given by 
\begin{equation}
\label{eq:open-prolonged}
\left\{
\begin{array}{rcl}
\dot{x} &=& f(x)+Bu \\
y &=& C x
\end{array}
\right.
\qquad 
\left\{
\begin{array}{rcl}
\dot{\delta x} &=& \partial f(x) \delta x + B \delta u \\
\delta y &=& C \delta x \ .
\end{array}
\right.
\end{equation}

A nonlinear system 
is \emph{differentially $p$-dissipative with rate $\lambda \ge 0$}  if
its prolonged system satisfies the conic constraint
\begin{equation}
\label{eq:diff-p-dissipativity}
\mymatrix{c}{\!\!\!\dot{\delta x}\!\!\! \\ \!\!\!\delta x\!\!\!}^T \!\!
\mymatrix{cc}{
\!0\! & \!P\! \\ \!P\! & \!2\lambda P \!+\! \varepsilon I\!
}\!
\mymatrix{c}{\!\!\!\dot{\delta x}\!\!\! \\ \!\!\!\delta x\!\!\!}
\leq
\mymatrix{c}{\! \!\!\delta y \!\!\! \\ \!\!\! \delta u\!\!\!}^T \!\!
\mymatrix{cc}{
\!Q\! & \! L \! \\ \! L^T \! & \! R \!
}\!
\mymatrix{c}{\!\!\!\delta y\!\!\! \\ \!\!\! \delta u\!\!\!} 
\end{equation}
for every $\delta x\in \realn$ and every $\delta u \in \real^m$,
where $P$ is a matrix with inertia $(p,0,n-p)$,
$L, Q,R$ are matrices of suitable dimension,
and $\varepsilon > 0$.

The following interconnection result easily follows.
\begin{proposition}
\label{thm:diff_dissipativity}
Let $\Sigma_1$ and $\Sigma_2$ 
differentially $p_1$-dissipative and differentially $p_2$-dissipative
respectively, with
uniform rate $\lambda$ and 
with differential supply rate
\begin{equation}
s_i(\delta y_i,\delta u_i) = \mymatrix{c}{\!\!\delta y\!\! \\ \!\!\delta u\!\!}^T \!
\mymatrix{cc}{
Q_i & L_i \\ L_i^T & R_i
}
 \mymatrix{c}{\!\!\delta y\!\! \\ \!\!\delta u\!\!}
\end{equation}
for $i \in \{1,2\}$.
The closed-loop system given 
by negative feedback interconnection \eqref{eq:feedback_interconnection}
is differentially $(p_1+p_2)$-dissipative with rate $\lambda$
from $v = (v_1,v_2)$ to $y = (y_1,y_2)$
with differential supply rate 
with supply rate 
\begin{equation}
\label{eq:cl_diff_supply}
s(y,v) \!=\!
\mymatrix{c}{\!\!\!\delta y \!\!\! \\ \!\!\!\delta v\!\!\!}^{\!T} \!\!
{\footnotesize 
\mymatrix{cc|cc}{
Q_1 + R_2 & -L_1 + L_2^T & L_1 & R_2 \\
-L_1^T + L_2 & Q_2  + R_1 & -R_1 & L_2 \\ \hline
L_1^T &  -R_1 & R_1 & 0 \\
R_2 & L_2^T & 0 & R_2 
}\!
}\!
\mymatrix{c}{\!\!\!\delta y\!\!\! \\ \!\!\!\delta v\!\!\!}  .
\end{equation}
The closed loop is differentially $(p_1+p_2)$-dominant if \eqref{eq:dissipativity_interconnection} holds.
\end{proposition}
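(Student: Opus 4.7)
The plan is to reduce the nonlinear statement to its linear counterpart, Proposition \ref{thm:dissipativity}, by observing that differential $p$-dissipativity is, by definition, pointwise (in $x$) $p$-dissipativity of the prolonged dynamics \eqref{eq:open-prolonged} with a state-independent storage $P$. The crucial structural fact is that the feedback interconnection \eqref{eq:feedback_interconnection} induces the \emph{same} interconnection on the differential variables: differentiating $u_1 = -y_2 + v_1$ and $u_2 = y_1 + v_2$ along an arbitrary variation gives $\delta u_1 = -\delta y_2 + \delta v_1$ and $\delta u_2 = \delta y_1 + \delta v_2$. Hence the prolonged closed loop is literally the feedback interconnection of the two prolonged subsystems, each of which is linear in $(\delta x_i,\delta u_i)$ for every fixed $x_i$.

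First I would assemble the aggregate storage $P := \smallmat{P_1 & 0 \\ 0 & P_2}$, whose inertia is $(p_1+p_2,\,0,\,n_1+n_2-p_1-p_2)$ by additivity of inertia under direct sums. Writing the hypothesis \eqref{eq:diff-p-dissipativity} for each subsystem and summing them, the left-hand sides combine into the quadratic form in $(\dot{\delta x},\delta x)$ associated with the block-diagonal $P$, while the right-hand side is $s_1(\delta y_1,\delta u_1) + s_2(\delta y_2,\delta u_2)$.

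Next I would substitute the differential interconnection equations into this right-hand side. This is the same algebraic rearrangement as in the proof of Proposition \ref{thm:dissipativity}: expanding $s_1$ and $s_2$ in the variables $(\delta y_1,\delta y_2,\delta v_1,\delta v_2)$ and collecting blocks reproduces exactly the matrix in \eqref{eq:cl_diff_supply}. This establishes differential $(p_1+p_2)$-dissipativity of the closed loop from $\delta v$ to $\delta y$ with the claimed supply rate. Since every step holds pointwise in $x$ with one and the same $P$, the property is genuinely differential and not merely along a single trajectory.

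Finally, for the dominance conclusion I would set $\delta v = 0$. Then $s(\delta y,0)$ becomes the quadratic form in $\delta y$ whose matrix is precisely the left-hand side of \eqref{eq:dissipativity_interconnection}; hypothesis \eqref{eq:dissipativity_interconnection} therefore forces the conic storage inequality \eqref{eq:diff-internal} on the aggregate prolonged dynamics for every $(\delta x, x)$, which is the definition of differential $(p_1+p_2)$-dominance with rate $\lambda$ and storage $P$. I do not anticipate any real obstacle: the restriction to constant $P_i$ in the definition of differential $p$-dissipativity is exactly what allows the linear proof to transfer verbatim; the only points requiring mild care are the inertia computation for the block-diagonal $P$ and the verification that the differential interconnection mirrors the primal one, both of which are immediate.
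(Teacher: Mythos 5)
Your proposal is correct and follows exactly the route the paper intends: the paper states this result without proof ("easily follows"), and your argument is the paper's own proof of Proposition \ref{thm:dissipativity} lifted pointwise in $x$ to the prolonged system \eqref{eq:open-prolonged}, using the block-diagonal storage, the linearity of the interconnection \eqref{eq:feedback_interconnection} in the variational variables, and the $\delta v=0$ specialization for the dominance claim. No gaps; the only minor detail worth noting is taking $\varepsilon=\min(\varepsilon_1,\varepsilon_2)$ when summing the two differential dissipation inequalities.
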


Proposition \ref{thm:diff_dissipativity} provides an  interconnection
result  for  differential $p$-dominance.
For example, under the assumptions of the proposition, 
the closed loop of a $1$-dominant system (typically a monotone system) 
with a $0$-dominant system (typically a contractive system)
is necessarily $1$-dominant.
In a similar way, the closed loop of two $1$-dominant 
systems leads to $2$-dominance. In this sense,  Proposition \ref{thm:diff_dissipativity} 
provides an interconnection mechanism
to generate periodic behavior ($2$-dominance) from 
the interconnection of multi-stable components ($1$-dominance).

\begin{remark}
Both $p$-dominance and $p$-dissipativity
can be tested algorithmically via simple relaxations.
From \eqref{eq:prolonged} and \eqref{eq:diff-internal} $p$-dominance requires
that 
\begin{equation}
\label{eq:p-monotonicityLMI}
\partial f(x)^T P + P \partial f(x) < -2\lambda P \qquad \forall x\in \realn \ .
\end{equation}
From \eqref{eq:open-prolonged} and 
\eqref{eq:diff-p-dissipativity} differential $p$-dissipativity requires
that
\begin{equation}
\label{eq:differential-p-dissipativityLMI}
{\small\mymatrix{cc}
{\!\!
\!\partial\! f(x)^T \! P \!+\! P \partial\! f(x) \!+ \!2\lambda P \!-\!C^T  \! Q C \!+\! \varepsilon I \! & \! P B \!-\! C^T\!L  \!\!\! \\
 \!\! B^T \!P \!-\! L^T C \!\!  & \!\! - R \!\!\!
 }}
\!\leq\! 0 \, .
\end{equation}

Let $\calA := \{A_1, \dots, A_N\}$ be a family of matrices such that
$\partial f(x) \in \mathit{ConvexHull}(\calA)$ for all $x$. Then,
by construction, 
any (uniform) solution $P$ to 
\begin{equation}
\label{eq:p-monotonicityLMIrelax}
A_i^T P + P A_i < -2\lambda P  \qquad 1\leq i\leq N
\end{equation}
is a solution to \eqref{eq:p-monotonicityLMI}. 
Also, any (uniform) solution $P$ to 
\begin{equation}
\label{eq:differential-p-dissipativityLMIrelax}
{\small 
\mymatrix{cc}
{\!
\!\!A_i^T  P + P  A_i + 2\lambda P -C^T Q C + \varepsilon I \!\!& \!\!P B - C^TL  \!\!\! \\
 \!\! B^T P - L^T C\!\!  &\!\! - R \!\!\!
 }}
\leq  0  
\end{equation}
for $1\leq i\leq N$
is a solution to \eqref{eq:differential-p-dissipativityLMI}. 

We recall that if $\lambda$ is chosen in such a way that each
$A_i+\lambda I$ has exactly $p$ unstable eigenvalues, 
then $P$ necessarily have inertia $(p,0,n-p)$.
\end{remark}

\begin{example}
\label{ex:nonlinear_msd}
We revisit Examples \ref{ex:linear_msd} and \ref{ex:linear_msd2} 
by replacing the linear spring 
with a nonlinear active component $\phi(x_1)$
\begin{equation}
\label{eq:nonlinear_osc}
\left\{
\begin{array}{rcl}
\dot{x_1} &=& x_2  \\
\dot{x_2} &=& \phi(x_1) - 8x_2 + u \ .\\
\end{array}
\right.
\end{equation}
A strictly monotone nonlinear spring 
$-2 \leq \partial \phi(x_1) < -1/2$ makes 
the system $0$-dominant with rate $\lambda = 0$. For instance, 
the linearization reads $\dot{\delta x} = A(x) \delta x + B u$  with
\begin{equation}
A(x) := \mymatrix{cc}{0 & 1\\ \partial \phi(x_1) & -8}  \,,\
B := \mymatrix{c}{0 \\ 1} \ ,
\end{equation}
and the matrix $P := \smallmat{  1  &  0.5 \\  0.5  &  1 }$ 
satisfies the inequality $A(x)^T P + PA(x)< - \lambda P$ 
uniformly in $x$ for $\lambda = 0$. By Proposition \ref{thm:contraction} 
the trajectories of the system exponentially converge to the unique fixed point
of the system.

A non-monotone spring $-3 \leq \partial \phi(x_1) < 1$ 
shifts the eigenvalues of $A(x)$  to the right-half complex plane.
$0$-dominance can no longer hold. However, the  matrix
$P := \smallmat{-1 & 0 \\ 0 & 1}$ 
satisfies $A(x)^T P + PA(x)< - \lambda P$ uniformly in $x$
for $\lambda = 1$, which makes the system $1$-dominant.
The system is differentially $1$-passive with rate
$\lambda$  from $u$ to $y=Cx = x_2$ 
since $[\,0 \ 1\,]^T =: C^T \!= PB$. 
Propositions \ref{thm:almost_convergence} and \ref{thm:diff_dissipativity}
guarantee that the closed loop system given by $u=-k y$ is 
$1$-dominant for any $k\geq 0$. All trajectories 
of the closed loop converge to a fixed point.

Different outputs can also be considered.
For example, again for $-3 \leq \partial \phi(x_1) < 1$,
the dissipation inequality is satisfied
uniformly in $x$ by the matrix $P := \smallmat{  -2  &  1 \\  1  &  2 }$ 
which makes \eqref{eq:nonlinear_osc} differentially $1$-passive from 
$u$ to $y=Cx = x_1 + 2x_2$ since $[\,1 \ 2\,]^T =: C^T \!= PB$.
By Proposition \ref{thm:diff_dissipativity}, the negative feedback
interconnection of two mass-spring-damper systems is 
differentially $2$-passive with rate $\lambda = 1$.
The $4$-dimensional system is $2$-dominant and, with some extra work, it can be
proven that all solutions converge either to the unstable equilibrium or to a limit cycle,
as illustrated by the simulation in Figure \ref{fig:osc},
where $\phi(x_1)= x_1 - \frac{1}{3}\min(x_1^2,4) x_1$. 
\begin{figure}[htbp]
\centering
\includegraphics[width=0.49\columnwidth]{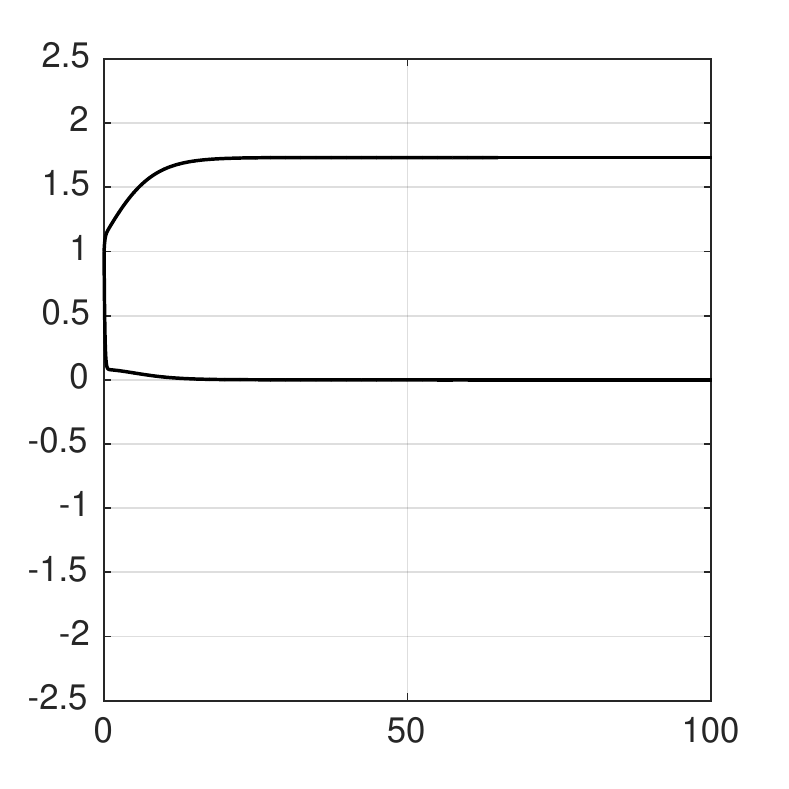} 
\includegraphics[width=0.49\columnwidth]{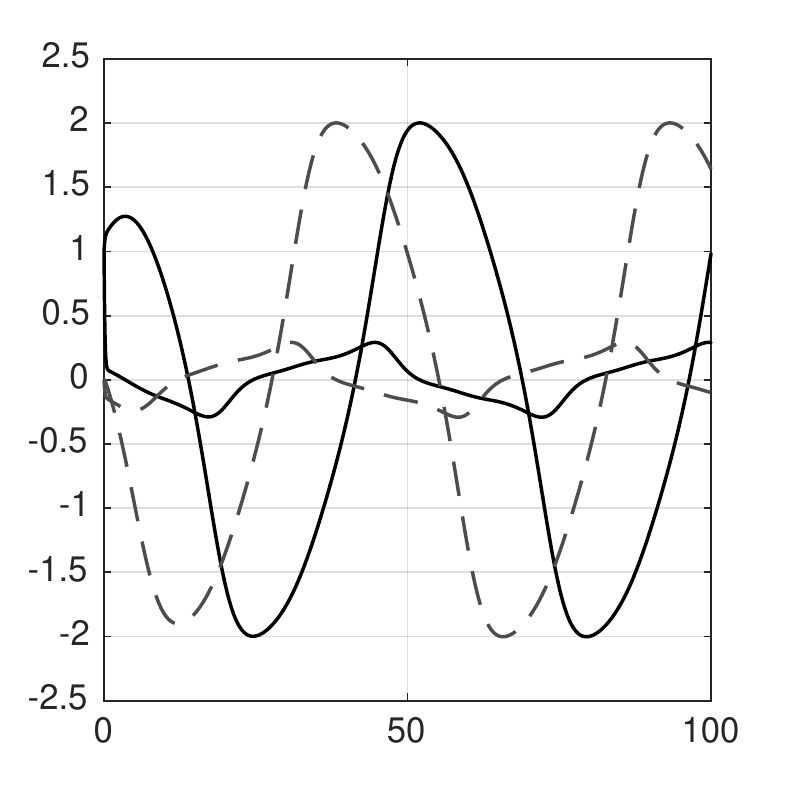}
\vspace*{-7mm}
\caption{\textbf{Left:} 
position and velocity of the single nonlinear mass spring damper system with
nonlinear spring $\phi(x_1)= x_1 - \frac{1}{3}\min(x_1^2,4) x_1$. 
Unforced behavior (stable fixed point)
from the initial condition $x_1 = 1$, $x_2 = 1$.
\textbf{Right:} positions variables of the closed loop of two mass spring damper systems 
with nonlinear spring $\phi(x_1)$ in feedback interconnection
from the input $u$ to the output $y=x_1 + 2x_2$. 
Trajectories converge to a limit cycle.}
\label{fig:osc}
\end{figure}
\end{example}

\section{Conclusions}
We introduced the notions of $p$-dominance and $p$-dissipativity
both in the linear and nonlinear settings. They provide a 
conceptual and algorithmic framework for 
the analysis of multi-stable and periodic behaviors. Interconnection
theorems are provided which extend classical dissipativity
theory with indefinite storages. The example illustrates the potential
of the approach.

The paper  only exposes the basic ideas of the proposed approach. 
Future research directions will include
a frequency domain characterization for $p$-dominance and $p$-dissipativity.
In the nonlinear setting it is relevant to extend the framework to
the case of non-constant matrix $P(x)$ and rate $\lambda(x)$,
following the lead of differential stability \cite{Forni2014} and differential 
positivity \cite{Forni2016,Forni2015}.  Finally, the interconnection theorems in the paper
only consider quadratic supplies and storages but the concept of $p$-dissipativity is of
course more general.

\section*{Acknowledgment}
The authors wish to thank I. Cirillo and F. Miranda for 
useful comments and suggestions to the manuscript. 
\appendix
\begin{proofof}{\emph{Proposition \ref{prop:equivalences}}.}

[\emph{1) implies LMI \eqref{eq:LMI-dominance}}] Given the splitting of eigenvalues, 
by coordinate transformation $A+\lambda I $ can be expressed
in the block diagonal form 
$
{\small
A + \lambda I  = \mymatrix{c|c}{A_u+ \lambda I & 0  \\  \hline 0 & A_s+\lambda I} 
}
$,
where $A_u+\lambda I$ has $p$ unstable eigenvalues and $A_s+\lambda I$ has $n-p$ stable eigenvalues. Thus, there exist 
positive definite matrices $P_u$ and $P_s$ of rank $p$ and $n-p$ respectively
such that 
$
{\small
P := \mymatrix{c|c}{-P_u & 0  \\  \hline 0 & P_s} 
}
$
satisfies $(A+\lambda I)^T P + P (A+\lambda I) < 0$.
\eqref{eq:LMI-dominance} follows.

[\emph{LMI \eqref{eq:LMI-dominance} implies 1)}]
By coordinate transformation, without loss of generality, consider the block diagonal
representation 
$
{\small
A + \lambda I  = \mymatrix{c|c}{A_u  & 0  \\  \hline 0 & A_s } 
}
$
where $A_u$ is a $r\times r$ matrix whose eigenvalues have positive real part and 
of $A_s$ is a $q\times q$ matrix whose eigenvalues have non-positive real part. Clearly $r+q=n$.
In the same coordinates, consider $P = \mymatrix{c|c}{P_u  & \star  \\  \hline \star & P_s }$
and note that 
the LMI \eqref{eq:LMI-dominance} reads
$
{\small \mymatrix{c|c}{A_u  & 0  \\  \hline 0 & A_s }^{\!T} \!\!\mymatrix{c|c}{P_u  & \star  \\  \hline \star & P_s } 
+ \mymatrix{c|c}{P_u  & \star  \\  \hline \star & P_s }\!\mymatrix{c|c}{A_u  & 0  \\  \hline 0 & A_s } < 0} \vspace{1mm}
$
which entails
$A_u^T P_u + P_u A_u < 0$
and 
$A_s^T P_s + P_s A_s  < 0$.
The strict inequality of the latter guarantees that the eigenvalues of $A_s$ are strictly negative. 
Furthermore, necessarily,
$P_u$ has $r$ negative eigenvalues and $P_s$ has $q$ positive eigenvalues.
By the assumption on the inertia of $P$ and 
by \cite[Lemma 2]{Gilber1991}, $r \geq p$ and $q \geq n-p$. Since $n=r+q$, it follows that $r=p$ and $q=n-p$.

[1) $\Leftrightarrow$ 2)]. The equivalence between 1) and 2) follows from 
standard properties of linear systems, using the fact that 
the trajectories of $\dot{\bar{x}} = (A+\lambda I) \bar{x}$ and of
$\dot{x} = A x$ from the same initial condition $x(0) = \bar{x}(0)$ 
satisfy $x(t) = e^{-\lambda t} \bar{x}(t)$ for each $t\in \real$.
\end{proofof}

\bibliographystyle{plain}

\end{document}